\renewcommand\footnotetextcopyrightpermission[1]{} 
\def\eg{\textit{e.g.}}
\def\ie{\textit{i.e.}}
\def\md{\mathbf{d}}
\def\dq{\Delta Q}
\def\Prob{\mathbb{P}}
\def\R{\mathbb{R}}
\def\ed{($\epsilon$,$\delta$)}
\DeclareMathOperator{\Lap}{Lap}
\DeclareMathOperator{\h}{h}
\DeclareMathOperator{\range}{Range}
\keywords{Differential privacy, Laplace mechanism, consistency, bounds, boun\-ded mechanism, truncated mechanism, resampling, rejection sampling}
\newif\iffullpaper
\begin{document}

\title{The Bounded Laplace Mechanism in Differential Privacy}

	\author{Naoise Holohan}
	\orcid{0000-0003-2222-9394}
	\affiliation{%
	\institution{IBM Research -- Ireland}
	}
	\email{naoise@ibm.com}
		 \author{Spiros Antonatos}
	 \affiliation{%
	\institution{IBM Research -- Ireland}
	}
	\email{santonat@ie.ibm.com}

	  \author{Stefano Braghin}
	  \affiliation{%
	\institution{IBM Research -- Ireland}
	}
	\email{stefanob@ie.ibm.com}

	 \author{P\'{o}l Mac Aonghusa}
	  \affiliation{%
	\institution{IBM Research -- Ireland}
	}
	\email{aonghusa@ie.ibm.com}

\begin{abstract}
The Laplace mechanism is the workhorse of differential privacy, applied to many instances where numerical data is processed.
However, the Laplace mechanism can return semantically impossible values, such as negative counts, due to its infinite support.
There are two popular solutions to this: (i) bounding\slash capping the output values and (ii) bounding the mechanism support.
In this paper, we show that bounding the mechanism support, while using the parameters of the pure Laplace mechanism, does not typically preserve differential privacy.
We also present a robust method to compute the optimal mechanism parameters to achieve differential privacy in such a setting.
\end{abstract}

\maketitle

\section{Introduction} \label{sc:intro}

Data privacy is an important factor that data owners must take into consideration when collecting, storing and publishing user data.
This extends to publishing statistics on user data.
In recent years, differential privacy has emerged as a popular privacy framework, thanks to its robust mathematical privacy guarantees.

The Laplace mechanism is the workhorse of differential privacy, frequently utilised in applications on numerical data.
Its strength lies in its mathematical and computational simplicity, in contrast to other mechanisms such as the exponential mechanism.
In spite of its popularity however, the Laplace mechanism lacks \emph{consistency} in its output.
Consider, for example, adding noise from the Laplace mechanism to a count query; negative results hold no meaning, yet are a valid output of the mechanism, occurring especially frequently for low-numbered counts.

\begin{example}\label{eg:1}
Suppose we are querying a census dataset, and seeking to learn the number of people born on Mars.
Adding noise from a Laplace mechanism with variance $\frac{2}{\epsilon^2}$ will satisfy differential privacy.
Although the real answer to the query is $0$ (for now at least!), we must add noise to protect the privacy of future human Martians.
Successive outputs from the Laplace mechanism could be: $-1.71$, $2.31$, $-1.20$, $0.652$.

However bizarre the query, negative outputs are patently illogical and inconsistent.
By the symmetry of the Laplace distribution, on average $50\%$ of the outputs will be negative.
\end{example}

Currently there are two solutions to this drawback, both involving the selection of an appropriate output domain.
If selection of the domain is done independently of the data, no privacy budget is consumed.
The first, \emph{truncation}, is to project values outside the domain to the closest value within the domain.
The second, \emph{bounding}, is to continue to sample independently from the mechanism until a value within the domain is returned.

\begin{example}
Using the same set-up as Example~\ref{eg:1}, if the Laplace mechanism returns a value $-1.71$, the truncation method projects the output to $0$ (the lower bound of a count query).
If the bounding method is used, the value is simply re-sampled, meaning the second value $2.31$ is returned (an analyst may subsequently wish to round this to $2$).
\end{example}

By design, the truncated Laplace mechanism has a (possibly large) non-zero probability of returning values at the domain bounds.
There are instances where this may be undesirable and\slash or incompatible, such as when the domain bounds coincide with singularities or values that otherwise result in a qualitative change in behaviour (\eg\ bifurcation points).
In such cases truncation may not be best-suited.

\begin{example}
Consider the case of releasing the variance of a distribution while using the Laplace mechanism to achieve differential privacy.
Zero variance is qualitatively different to non-zero variance, and may result in complications in its use.
In this case the bounded mechanism is a more appropriate choice as it has a zero probability of returning a zero variance.
\end{example}

In this paper we show that the bounded Laplace mechanism \emph{does not} typically satisfy differential privacy when inheriting parameters from the pure Laplace mechanism (see Section~\ref{sc:boundeddomain}).
In fact, in almost all cases, the variance of the Laplace distribution must be increased for the bounded Laplace mechanism to satisfy the same differential privacy constraints.

The statistical properties of the truncated and bounded Laplace mechanisms were initially studied in~\cite{Liu17}; further comparisons of the two mechanisms are beyond the scope of this paper.

Complete proofs to most lemmas and theorems are given in the Appendix.

\section{Preliminaries}

We first detail the notation that we'll use in this paper, broadly following the style introduced in~\cite{HLM15}.

We are interested in queries $Q: \mathcal{S}^n \to D$ on databases $\md \in \mathcal{S}^n$ mapping to a finite domain $D = [l, u]$ ($l < u$, both finite).
The sensitivity of $Q$ is defined in the usual way, $\Delta Q = \max_{\h(\md, \md^\prime) = 1} |Q(\md) - Q(\md^\prime)|$, where $\h: \mathcal{S}^n \times \mathcal{S}^n \to \mathbb{N}$ denotes Hamming distance.

In this paper we are only concerned with \emph{output perturbation} mechanisms, so we need only consider response mechanisms of the form $Y_q: \Omega \to \R$ for each $q \in D$ (since $Q(\mathcal{S}^n) \subseteq D$).
Given $\epsilon \ge 0$ and $0 \le \delta \le 1$, the mechanism $\{Y_q \mid q \in D\}$ satisfies {\ed}-differential privacy when
$$\Prob(Y_q \in A) \le e^\epsilon \Prob(Y_{q^\prime} \in A) + \delta,$$
for all measurable $A \subseteq \R$ and whenever $|q - q^\prime| \le \Delta Q$.

We denote by $\Lap(\mu, b)$ a Laplace distribution with mean $\mu$ and variance $2b^2$.
The standard Laplace mechanism is therefore given by
\begin{equation}\label{eq:laplacemech}
Y_{q} = q + \Lap(0, b) = \Lap(q, b),
\end{equation}
and satisfies {\ed}-differential privacy when $b \ge \frac{\dq}{\epsilon - \log(1-\delta)}$~\cite[Example~5]{HLM15}.
Note that $\Prob(Y_q \in \R \setminus D) > 0$, whereas we seek $\range(Y_q) = D$ for consistency.

\section{Bounded Laplace Mechanism} \label{sc:boundeddomain}

As the support of the Laplace distribution is infinite, it is common for the output of the Laplace mechanism to fall outside the range of $Q$.
Currently, there are two popular solutions to overcome this.
The first, which we will call \emph{truncation}, involves a deterministic mapping to the upper\slash lower bounds of the output domain, when the value falls outside.

Another approach is to bound the support of the response mechanism, and then sample directly from the output domain (\eg\ by inverse transform sampling).
This can also be achieved through rejection sampling, by continually redrawing from the unbounded distribution until an output falls within the domain.
We will refer to this process as \emph{bounding}, as the pure outputs of the mechanism are bounded by design.

\begin{definition}[Bounded Laplace Mechanism]\label{df:boundedlaplace}
Given $b > 0$ and $D \subset \R$, the \emph{bounded Laplace mechanism} $W_q: \Omega \to D$, for each $q \in D$, is given by its probability density function $f_{W_q}$:
$$f_{W_q}(x) = \begin{cases}
0, & \text{if } x \notin D, \\
\frac{1}{C_q} \frac{1}{2b} e^{-\frac{|x - q|}{b}}, & \text{if } x \in D,
\end{cases}$$
where $C_q = \int_D \frac{1}{2b} e^{-\frac{|x - q|}{b}} dx$ is a normalisation constant.
\end{definition}

\textbf{Remark 1:} It follows that $\Prob(W_q \in D) = 1$, and, conversely, that $\Prob(W_q \in \R \setminus D) = 0$.

\textbf{Remark 2:} Given $A \subseteq \R$, $\Prob(W_q \in A) = \frac{1}{C_q} \Prob(Y_q \in A \cap D)$, where $Y_q$ is given in (\ref{eq:laplacemech}).

As the output distribution is now a function of the query answer $Q(\md) = q$, the normalisation factor $C_q$ is no longer constant.
It is therefore no longer guaranteed that the mechanism $W_q$ satisfies differential privacy using parameters from the (pure) Laplace mechanism.

\subsection{Preliminary Results}

We first establish an algebraic representation for $C_q$.

\begin{lemma}\label{lm:cq}
For $C_q$ as given in Definition~\ref{df:boundedlaplace}, and for $q \in D = [l, u]$,
$$C_q = 1 - \frac{1}{2}\left(e^{-\frac{q - l}{b}} + e^{-\frac{u - q}{b}}\right).$$
\end{lemma}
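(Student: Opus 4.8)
The plan is to evaluate the integral defining $C_q$ directly by exploiting the piecewise structure of the absolute value. Since $q \in [l, u]$ lies inside the domain of integration, the point $x = q$ splits $D$ into two subintervals $[l, q]$ and $[q, u]$, on which $|x - q|$ simplifies to $q - x$ and $x - q$ respectively. I would therefore write
\[
C_q = \int_l^q \frac{1}{2b} e^{-\frac{q - x}{b}}\, dx + \int_q^u \frac{1}{2b} e^{-\frac{x - q}{b}}\, dx,
\]
reducing the problem to integrating two ordinary decaying exponentials.

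Next I would evaluate each piece by elementary antidifferentiation. The left integral contributes $\frac{1}{2}\bigl(1 - e^{-\frac{q - l}{b}}\bigr)$ and the right integral contributes $\frac{1}{2}\bigl(1 - e^{-\frac{u - q}{b}}\bigr)$; in both cases the evaluation at $x = q$ produces the constant term $\frac{1}{2}$, while the evaluation at the relevant domain endpoint produces the exponential term. Adding the two contributions and collecting the two constant halves into a single $1$ yields the claimed identity
\[
C_q = 1 - \frac{1}{2}\left(e^{-\frac{q - l}{b}} + e^{-\frac{u - q}{b}}\right).
\]

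There is no genuine obstacle here; the computation is routine once the split is made. The only point requiring care is the hypothesis $q \in [l, u]$, which guarantees that the breakpoint of the absolute value falls within the range of integration and hence that the two subintervals combine to cover $D$ exactly. Were $q$ to lie outside $[l, u]$, the integrand would not change its functional form on $D$ and the resulting expression would differ, so it is worth flagging explicitly that the stated formula depends on this membership assumption. As a quick consistency check, evaluating at the symmetric midpoint $q = \frac{l + u}{2}$ (where the two exponentials coincide) confirms that the algebra has been carried out correctly.
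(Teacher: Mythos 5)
Your proof is correct: splitting the integral at $x=q$ and antidifferentiating each exponential piece gives exactly the stated formula, and the paper itself omits the proof of this lemma as a routine computation, so your argument is precisely the intended one. The remark about the hypothesis $q \in [l,u]$ being what places the breakpoint of $|x-q|$ inside the integration range is a sensible observation and consistent with how the lemma is stated.
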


We next consider the following lemma concerning $C_q$.

\begin{lemma}\label{lm:min}
Let $C_q$ be given by Definition~\ref{df:boundedlaplace}. Then,
$$\max_{\substack{q, q^\prime \in D \\ |q^\prime-q| \le \dq}} \frac{C_{q^\prime}}{C_q} e^{\frac{|q^\prime-q|}{b}} = \frac{C_{l+\dq}}{C_l} e^{\frac{\dq}{b}}.$$
\end{lemma}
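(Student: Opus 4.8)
The plan is to treat the quantity as a function $G(q,q^\prime) = \frac{C_{q^\prime}}{C_q} e^{|q^\prime - q|/b}$ on the feasible region $R = \{(q,q^\prime)\in D^2 : |q^\prime - q| \le \dq\}$ and to locate its maximiser by a monotonicity argument. First I would dispose of the absolute value by exploiting the reflection symmetry of $C_q$: Lemma~\ref{lm:cq} shows that the closed form is invariant under $q \mapsto l+u-q$, so $C_q = C_{l+u-q}$, and consequently the map $(q,q^\prime) \mapsto (l+u-q,\, l+u-q^\prime)$ preserves both $G$ and $R$ while exchanging the two orderings of $q$ and $q^\prime$. Hence it suffices to maximise over the sub-region where $q^\prime \ge q$, on which $G(q,q^\prime) = \frac{C_{q^\prime}}{C_q} e^{(q^\prime - q)/b}$.

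Next I would show that on this sub-region $G$ increases as $q^\prime$ grows and decreases as $q$ grows, so that the optimum sits at the ``corner'' $q = l$, $q^\prime = l+\dq$. Substituting the closed form from Lemma~\ref{lm:cq} and differentiating $\log G$, I expect the two partial derivatives to collapse, via the identity $\frac12\bigl(e^{-(q-l)/b} - e^{-(u-q)/b}\bigr) + C_q = 1 - e^{-(u-q)/b}$, to
\[
\frac{\partial}{\partial q^\prime}\log G = \frac{1 - e^{-(u-q^\prime)/b}}{b\,C_{q^\prime}} \ge 0, \qquad \frac{\partial}{\partial q}\log G = -\,\frac{1 - e^{-(u-q)/b}}{b\,C_q} \le 0,
\]
the signs following from $q,q^\prime \le u$ and $C_q,C_{q^\prime} > 0$. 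With these signs fixed throughout $R$, decreasing $q$ to its least value $l$ and then increasing $q^\prime$ until the constraint $q^\prime - q \le \dq$ binds (which occurs at $q^\prime = l+\dq$, using the standing assumption $\dq \le u-l$) can only increase $G$; this pins the maximiser at $(l,\,l+\dq)$ and yields the claimed value $\frac{C_{l+\dq}}{C_l} e^{\dq/b}$.

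The main obstacle I anticipate is not the calculus but the bookkeeping around the absolute value and the boundary: I must ensure the reflection step genuinely reduces the two orderings to a single one rather than merely relating them, and that the ``push to the corner'' is justified globally on $R$ and not just locally. The constant signs of the partials deliver the global statement, but because the constraints $q^\prime - q \le \dq$ and $q^\prime \le u$ are coupled, I should confirm that $(l,\,l+\dq)$ is indeed the correct vertex, which is exactly where $\dq \le u-l$ enters. As a self-contained alternative I would note that $C_q$ is concave and positive, since Lemma~\ref{lm:cq} gives $C_q^{\prime\prime} = -\frac{1}{2b^2}\bigl(e^{-(q-l)/b} + e^{-(u-q)/b}\bigr) < 0$; hence $\log C_q$ is concave, so $q \mapsto \log C_{q+\dq} - \log C_q$ is nonincreasing, which directly shows that the ratio $C_{q+\dq}/C_q$ is maximal at $q=l$ once the gap has been fixed at $\dq$.
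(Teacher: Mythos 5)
Your symmetry reduction and the two partial derivatives of $\log G$ are both correct (the $q^\prime$-derivative is exactly the paper's Lemma~\ref{lm:dcdz}), but the ``push to the corner'' step has a genuine gap. After restricting to $q \le q^\prime$, the feasible region is the parallelogram with vertices $(l,l)$, $(l,l+\dq)$, $(u-\dq,u)$, $(u,u)$. A function that is decreasing in $q$ and increasing in $q^\prime$ attains its maximum somewhere on the binding edge $\{q^\prime-q=\dq\}$, but the two separate signs say nothing about \emph{where} on that edge: moving along it trades a decrease in $q$ (which helps) against a decrease in $q^\prime$ (which hurts). Your proposed path exhibits the problem concretely: with $D=[0,10]$, $\dq=1$ and $(q,q^\prime)=(5,6)$, sending $q\to l$ first lands you at $(0,6)$, which already violates $q^\prime-q\le\dq$, and reaching $(0,1)$ from there requires \emph{decreasing} $q^\prime$, which decreases $G$. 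What is actually needed is the sign of the directional derivative along the diagonal, i.e.\ of $\frac{\partial}{\partial q}\bigl(\tfrac{C_{q+z}}{C_q}e^{z/b}\bigr)$ with the gap $z=q^\prime-q$ held fixed; this is the \emph{sum} of your two partials, and its nonpositivity (the paper's Lemma~\ref{lm:dcdq}) is a separate fact not implied by their individual signs.

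Your closing ``alternative'' supplies exactly this missing fact, and more cleanly than the paper does: $C_q$ is positive and concave on $D$ (its second derivative is $-\tfrac{1}{2b^2}\bigl(e^{-(q-l)/b}+e^{-(u-q)/b}\bigr)<0$ and $C_l=\tfrac12\bigl(1-e^{-(u-l)/b}\bigr)>0$), so $\log C_q$ is concave and $q\mapsto\log C_{q+z}-\log C_q$ is nonincreasing for \emph{any} fixed $z\ge 0$, not just $z=\dq$. That is precisely the diagonal monotonicity of Lemma~\ref{lm:dcdq}. Promote it to the main line and reorder the moves: first slide along the diagonal to $q=l$ with $z=q^\prime-q$ fixed (log-concavity), then increase $q^\prime$ from $l+z$ to $l+\dq$ (your $\partial_{q^\prime}\log G\ge 0$, valid since $l+\dq\le u$). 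With that reordering the proof is complete, and the log-concavity route is arguably slicker than the paper's direct computation.
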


\begin{proof}
	The following is an outline of the full proof given in Section~\ref{sc:a:lm:min}.
	By the symmetry of $C_q$ about $\frac{u+l}{2}$, we can assume that $q^\prime \ge q$.
	Showing that $\frac{\partial}{\partial z} \left( \frac{C_{q+z}}{C_q} e^{\frac{z}{b}}\right) \ge 0$ and $\frac{\partial}{\partial q} \left( \frac{C_{q+z}}{C_q} e^{\frac{z}{b}}\right) \le 0$ completes the proof.
\end{proof}

This leads us to the following definition of $\Delta C(b)$ for later use.

\begin{definition}
Given $C_q$ from Definition~\ref{df:boundedlaplace}, and noting that $C_q = C_q(b)$ is a function of $b$, we define $\Delta C(b)$ as follows:
$$\Delta C(b) = \frac{C_{l+\dq}(b)}{C_l(b)}.$$
\end{definition}

\subsection{Main Result}\label{sc:bounded:main}

We now proceed to the main result of this paper, which defines the variance required for the bounded Laplace mechanism.

\begin{theorem}\label{th:mainres}
Let $W_q$ be the bounded Laplace mechanism given in Definition~\ref{df:boundedlaplace} and let $\epsilon \ge 0$ and $0 \le \delta \le 1$ be given.
Then $\{W_q \mid q \in D\}$ satisfies {\ed}-differential privacy whenever
\begin{equation}\label{eq:mainres}
b \ge \frac{\dq}{\epsilon - \log \Delta C(b) - \log(1 - \delta)}.
\end{equation}
\end{theorem}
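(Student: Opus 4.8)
The plan is to verify the defining inequality $\Prob(W_q \in A) \le e^\epsilon \Prob(W_{q'} \in A) + \delta$ directly from the densities of Definition~\ref{df:boundedlaplace}, with Lemma~\ref{lm:min} supplying the worst-case constant. First I would record a pointwise bound on the density ratio. For $x \in D$ and $q, q' \in D$, the reverse triangle inequality $|x - q'| - |x - q| \le |q - q'|$ gives
\[ \frac{f_{W_q}(x)}{f_{W_{q'}}(x)} = \frac{C_{q'}}{C_q}\, e^{\frac{|x - q'| - |x - q|}{b}} \le \frac{C_{q'}}{C_q}\, e^{\frac{|q - q'|}{b}}. \]
Maximising the right-hand side over the feasible region $|q - q'| \le \dq$ by Lemma~\ref{lm:min} bounds it uniformly by $K := \Delta C(b)\, e^{\frac{\dq}{b}}$, so that $f_{W_q}(x) \le K f_{W_{q'}}(x)$ for every $x \in D$ and every admissible pair $(q, q')$.

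Next I would reduce the privacy inequality to a single worst-case event. Because $\Prob(W_q \in A) - e^\epsilon \Prob(W_{q'} \in A) = \int_A (f_{W_q} - e^\epsilon f_{W_{q'}})\,dx$ is maximised over measurable $A$ by the set $B = \{x \in D : f_{W_q}(x) > e^\epsilon f_{W_{q'}}(x)\}$, it suffices to show $\int_B (f_{W_q} - e^\epsilon f_{W_{q'}})\,dx \le \delta$. Rewriting the pointwise bound as $f_{W_{q'}} \ge f_{W_q} / K$ on $B$ and using $\int_B f_{W_q}\,dx \le 1$ yields
\[ \int_B \left( f_{W_q} - e^\epsilon f_{W_{q'}} \right) dx \le \left( 1 - \frac{e^\epsilon}{K} \right) \int_B f_{W_q}\,dx \le 1 - \frac{e^\epsilon}{K}. \]
To finish, I would observe that hypothesis~(\ref{eq:mainres}) rearranges to $e^\epsilon \ge \Delta C(b)\, e^{\frac{\dq}{b}} (1 - \delta) = K(1 - \delta)$, which is exactly $1 - \frac{e^\epsilon}{K} \le \delta$, closing the chain.

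I expect the handling of the additive $\delta$ term to be the only genuinely delicate step; the triangle inequality and the final rearrangement are routine. The naive estimate $\int_B (f_{W_q} - e^\epsilon f_{W_{q'}})\,dx \le (K - e^\epsilon)\Prob(W_{q'} \in B) \le K - e^\epsilon$ establishes privacy only under the more restrictive condition $K \le e^\epsilon + \delta$, and so falls short of the stated (weaker) hypothesis. Recovering the sharp factor $(1 - \delta)$ relies on bounding $f_{W_{q'}}$ from \emph{below} by $f_{W_q}/K$ on $B$, rather than bounding it from above, together with the total-mass bound $\int_B f_{W_q}\,dx \le 1$. This is the step that must be handled correctly for the variance inflation to emerge as $\Delta C(b)$ tempered by $(1 - \delta)$ rather than as a cruder additive correction.
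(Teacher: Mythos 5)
Your proof is correct and takes essentially the same route as the paper's: the triangle inequality yields the density-ratio bound, Lemma~\ref{lm:min} supplies the worst-case constant $\Delta C(b)\,e^{\frac{\dq}{b}}$, and the $(1-\delta)$ factor is recovered via the total-mass bound $\int_B f_{W_q}\,dx \le 1$, which is exactly the paper's step $C_q \ge \int_A \frac{1}{2b}e^{-\frac{|x-q|}{b}}\,dx$ in different clothing. The only (harmless) cosmetic difference is your reduction to the worst-case event $B$; just note that the inequality $\left(1-\frac{e^\epsilon}{K}\right)\int_B f_{W_q}\,dx \le 1-\frac{e^\epsilon}{K}$ presumes $K \ge e^\epsilon$, the opposite case being trivial since the left-hand side is then nonpositive.
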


\begin{proof}
	The following is an outline of the full proof given in Section~\ref{sc:a:th:mainres}.
	We are seeking to show that
	$$\Prob(W_q \in A) \le e^\epsilon \Prob(W_{q^\prime} \in A) + \delta,$$
	for any measurable $A \subseteq D$ and where $q, q^\prime \in D$, $|q-q^\prime| \le \dq$.
	For this to hold it is sufficient to show that $1 \le e^{\epsilon - \frac{|q^\prime-q|}{b}} \frac{C_d}{C_{q^\prime}} + \delta$.
	Furthermore by Lemma~\ref{lm:min}, it is sufficient to show that
	$$1 \le \frac{1}{\Delta C(b)} e^{\epsilon-\frac{\dq}{b}} + \delta,$$
	which can be solved implicitly for $b$ to complete the proof.
\end{proof}

\begin{figure*}[t]
	\input{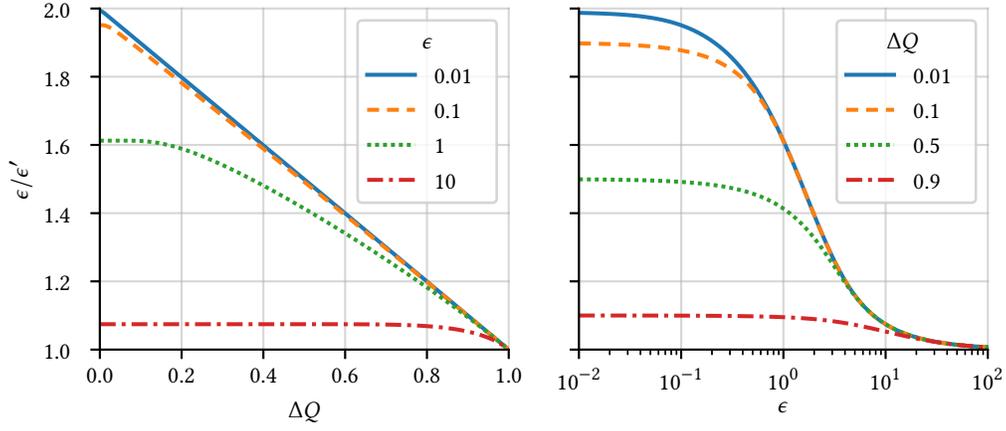}
  \caption{\label{fig:bounded:1} Relationship of $\frac{\epsilon}{\epsilon^\prime}$ to $\dq$ and $\epsilon$, where $u-l=1$ and $\delta=0$ are
 fixed.}
\end{figure*}

\textbf{Discussion:}
To satisfy {\ed}-differential privacy using the bounded Laplace mechanism, its variance will never be less than that of the (pure) Laplace mechanism (since $\Delta C(b) \ge 1$).
In the case of achieving $\epsilon$-differential privacy (\ie\ $\delta=0$), the underlying Laplace distribution must be one which satisfies $\epsilon^\prime$-differential privacy, where $\epsilon^\prime = \epsilon - \log \Delta C(b)$ (\ie\ for a target $\epsilon$, we require an effective $\epsilon^\prime$).
Inverse transform sampling or rejection sampling can then be used to determine the output.
As shown in Figure~\ref{fig:bounded:1}, the impact on $\epsilon^\prime$ is most pronounced when $\dq$ and $\epsilon$ are small; the graphical evidence aligns with the intuition that $2 \epsilon^\prime = \epsilon$ in the limiting case.

However, finding the optimal value for $b$ is non-trivial since the relationship given in Theorem~\ref{th:mainres} is implicitly defined.
This problem is examined in Section~\ref{sc:fixedpoint}.

The simpler task of determining a value of $\epsilon$ (or a relationship between $\epsilon$ and $\delta$) to a given value of $b$ can be achieved with (\ref{eq:mainres}).

\section{Calculating \lowercase{\texorpdfstring{$b$}{b}}} \label{sc:fixedpoint}

From the conclusion of Theorem~\ref{th:mainres}, let's define the following fixed point operator for $b$.

\begin{definition}[Fixed Point Operator]
Given $\dq > 0$, $\epsilon \ge 0$ and $0 \le \delta \le 1$, we define the fixed point operator $f: \R_{>0} \to \R_{>0}$ by
\begin{equation}\label{eq:fb}
f(b) = \frac{\dq}{\epsilon - \log \Delta C(b) - \log(1-\delta)}.
\end{equation}
\end{definition}

Any positive fixed point of $f$ (\ie\ $b^* = f(b^*) > 0$) will act as a differentially private shape parameter for the bounded Laplace mechanism.
In advance of examining $f$, we first define
$$b_0 = \frac{\dq}{\epsilon - \log(1-\delta)}.$$
Note that $b_0$ determines the variance required for the (pure) Laplace mechanism to achieve {\ed}-differential privacy.

We now present a number of lemmas concerning $f$, namely (i)~the value of $f(b_0)$ and (ii)~the monotonicity of $f$.
Proofs are given in Sections~\ref{sc:a:lm:posinitpoint} and \ref{sc:a:lm:fprime}.

\begin{lemma}\label{lm:posinitpoint}
$f(b_0) \ge b_0$,
and $f(b_0) = b_0$ if and only if $\dq = u-l$.
\end{lemma}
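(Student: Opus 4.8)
The plan is to reduce the statement about $f$ to a monotonicity-free statement about $\Delta C$, and then to prove a clean inequality for $\Delta C$ directly from the closed form of $C_q$ in Lemma~\ref{lm:cq}. First I would observe that $f(b_0)$ and $b_0$ share the same positive numerator $\dq$, and that the denominator of $f(b_0)$, namely $\epsilon - \log(1-\delta) - \log \Delta C(b_0)$, differs from the denominator of $b_0$ by exactly the term $-\log \Delta C(b_0)$. Hence, provided the denominators are positive, the comparison $f(b_0) \ge b_0$ is equivalent to $\log \Delta C(b_0) \ge 0$, i.e.\ to $\Delta C(b_0) \ge 1$, and the boundary case $f(b_0) = b_0$ corresponds precisely to $\Delta C(b_0) = 1$. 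This converts the entire lemma into a question about whether $C_{l+\dq} \ge C_l$.

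Next I would establish the stronger fact that $\Delta C(b) \ge 1$ for every $b > 0$. Since $C_l > 0$, this is equivalent to $C_{l+\dq} \ge C_l$. Substituting the formula from Lemma~\ref{lm:cq} and writing $x = e^{-\dq/b}$ and $y = e^{-(u-l-\dq)/b}$, so that $xy = e^{-(u-l)/b}$, a short computation gives
$$C_{l+\dq} - C_l = \frac{1}{2}\left(1 + xy - x - y\right) = \frac{1}{2}(1-x)(1-y).$$
This factorisation is the crux of the argument. Because the sensitivity necessarily satisfies $0 < \dq \le u-l$ (both query values lie in $[l,u]$), the exponents $\dq/b$ and $(u-l-\dq)/b$ are nonnegative, so $x, y \in (0,1]$ and therefore $(1-x)(1-y) \ge 0$. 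This yields $\Delta C(b) \ge 1$, and in particular $\Delta C(b_0) \ge 1$, giving $f(b_0) \ge b_0$.

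For the equality characterisation I would note that the product $(1-x)(1-y)$ vanishes if and only if one of its factors is zero. The factor $1 - x = 1 - e^{-\dq/b}$ cannot vanish, since $\dq > 0$, so equality forces $1 - y = 0$, i.e.\ $u - l - \dq = 0$, which is exactly $\dq = u-l$. Conversely, if $\dq = u-l$ then $y = 1$ and $\Delta C(b) = 1$ identically. Translating back through the reduction of the first step then gives $f(b_0) = b_0$ if and only if $\dq = u-l$.

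I expect the only genuine subtlety to be bookkeeping rather than mathematics. One must confirm that the denominator of $f(b_0)$ is positive, so that $f(b_0)$ is a legitimate positive value and the fraction comparison in the first step is valid, and one must keep careful track of the admissible range $0 < \dq \le u-l$, as it is precisely this range that makes both factors in the factorisation nonnegative. The factorisation $(1-x)(1-y)$ itself is the key step, and once it is in hand everything else is immediate.
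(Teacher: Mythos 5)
Your reduction of the lemma to the statement $\Delta C(b_0) \ge 1$ matches the paper's, but your proof of that inequality takes a genuinely different and arguably cleaner route. The paper substitutes $b = b_0$ into the closed form of $\Delta C$, rewrites everything in terms of $\alpha = \frac{e^\epsilon}{1-\delta}$ and $\beta = \frac{u-l}{\dq}$, and runs two separate chains of power inequalities. You instead prove the stronger, $b$-independent fact $\Delta C(b) \ge 1$ for all $b > 0$ via the factorisation
$$C_{l+\dq} - C_l = \tfrac{1}{2}(1-x)(1-y), \qquad x = e^{-\dq/b},\quad y = e^{-\frac{u-l-\dq}{b}},$$
which is correct (one checks $xy = e^{-(u-l)/b}$ and expands Lemma~\ref{lm:cq}), and which also delivers the equality characterisation $\dq = u-l$ immediately because the factor $1-x$ never vanishes. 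For that half of the lemma your argument is both more general and more transparent than the paper's.

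However, the step you defer as ``bookkeeping'' --- positivity of the denominator $\epsilon - \log(1-\delta) - \log\Delta C(b_0)$ --- is a substantive part of the lemma, not a formality; the paper devotes half of its proof to it. Positivity is equivalent to $\Delta C(b_0) < \frac{e^\epsilon}{1-\delta}$, and this genuinely depends on evaluating at $b_0$: when $\dq < u-l$ one has $\Delta C(b) \to 2$ as $b \to 0^+$, so for small $\epsilon$ and $\delta$ the denominator of $f(b)$ is negative for small $b$. If it were negative at $b_0$, then $f(b_0) < 0 < b_0$ and the inequality you are proving would be false, so you cannot omit this check. Fortunately your own notation closes the gap in one line: since $\frac{e^\epsilon}{1-\delta} = e^{\dq/b_0} = 1/x$ and $\Delta C(b_0) = \frac{2-x-y}{1-xy}$, the required inequality $\frac{2-x-y}{1-xy} < \frac{1}{x}$ is equivalent to $2x - x^2 - xy < 1 - xy$, i.e.\ to $(1-x)^2 > 0$, which holds because $\dq > 0$ and $b_0 < \infty$. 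Add that computation and the proof is complete.
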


\begin{lemma}\label{lm:fprime}
$f^\prime(b) \le 0$ whenever $b \ne 0$, and $f^\prime(b) = 0$ if and only if $\dq = u-l$.
\end{lemma}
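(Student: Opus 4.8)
The plan is to reduce the statement about $f$ to the monotonicity of $\Delta C$, and then to prove that monotonicity by an explicit factorisation. Write $f(b) = \dq / g(b)$ with $g(b) = \epsilon - \log \Delta C(b) - \log(1-\delta)$, so that $f'(b) = -\dq\, g'(b)/g(b)^2$. Since $\dq > 0$ and $g(b)^2 > 0$ wherever $f$ is defined, the sign of $f'(b)$ is opposite to that of $g'(b)$, and $f'(b) = 0$ exactly when $g'(b) = 0$. As $g'(b) = -(\Delta C)'(b)/\Delta C(b)$ and $\Delta C(b) > 0$ (both $C_l$ and $C_{l+\dq}$ are positive, being normalising constants of probability densities), it suffices to show that $(\Delta C)'(b) \le 0$ for $b > 0$, with equality if and only if $\dq = u-l$.

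Next I would substitute the closed form from Lemma~\ref{lm:cq}. Setting $\alpha = e^{-\dq/b}$ and $\beta = e^{-(u-l-\dq)/b}$, a short computation gives $2C_l = 1 - \alpha\beta$ and $2C_{l+\dq} = 2 - \alpha - \beta$, so that $\Delta C(b) = N/D$ with $N = 2 - \alpha - \beta$ and $D = 1 - \alpha\beta$. Because $0 < \dq \le u-l$ (the sensitivity cannot exceed the diameter of $D$), we have $\alpha \in (0,1)$ and $\beta \in (0,1]$, with $\beta = 1$ precisely when $\dq = u-l$. To clear the reciprocals in the exponents I would pass to $t = 1/b$, so that $\alpha = e^{-\dq t}$, $\beta = e^{-(u-l-\dq)t}$, and $\Delta C = h(t)$ with $(\Delta C)'(b) = -t^2\, h'(t)$; it therefore remains to show $h'(t) \ge 0$, i.e.\ $N'D - ND' \ge 0$, where the derivatives are now taken in $t$.

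The crux is the factorisation of this quantity. Differentiating gives $N' = \dq\,\alpha + (u-l-\dq)\beta$ and $D' = (u-l)\alpha\beta$, and I expect that expanding $N'D - ND'$ and collecting the $\dq$ and $(u-l-\dq)$ contributions separately yields
\[
N'D - ND' = \dq\,\alpha(1-\beta)^2 + (u-l-\dq)\,\beta(1-\alpha)^2.
\]
This is the step I expect to be the main obstacle: no single manipulation is hard, but the identity only becomes transparent after the cross term $2(u-l)\alpha\beta$ is split as $2\dq\,\alpha\beta + 2(u-l-\dq)\alpha\beta$ so that the two squares $(1-\beta)^2$ and $(1-\alpha)^2$ complete. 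Once the target form is guessed, verification is a routine expansion.

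Finally I would read off the conclusion. Since $\dq > 0$, $u-l-\dq \ge 0$, and $\alpha,\beta \in (0,1]$, both summands are non-negative, so $N'D - ND' \ge 0$, hence $h'(t) \ge 0$ and $(\Delta C)'(b) \le 0$, giving $f'(b) \le 0$. For the equality case, the first summand $\dq\,\alpha(1-\beta)^2$ vanishes only if $\beta = 1$ (as $\dq,\alpha > 0$), i.e.\ $u-l-\dq = 0$, and this simultaneously kills the second summand; conversely $\dq = u-l$ makes both terms zero. Thus $f'(b) = 0$ if and only if $\dq = u-l$, completing the proof.
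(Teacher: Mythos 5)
Your proof is correct and follows essentially the same route as the paper: both reduce $f'(b)\le 0$ to $\frac{\partial}{\partial b}\Delta C(b)\le 0$ via the quotient/chain rule and then compute that derivative of $\Delta C$ explicitly, concluding with the same equality criterion. The only difference is the final algebraic step: the paper lower-bounds the resulting bracket by a single perfect square using $e^a+e^{-a}\ge 2$, whereas your substitution $\alpha=e^{-\dq/b}$, $\beta=e^{-(u-l-\dq)/b}$ gives the exact identity $N'D-ND'=\dq\,\alpha(1-\beta)^2+(u-l-\dq)\,\beta(1-\alpha)^2$ (which checks out and is precisely the paper's bracket), so the characterisation of when $f'(b)=0$ falls out immediately instead of having to be tracked through an inequality.
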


This leads us to the main result of this section, that $f$ has a unique fixed point $b^*$.

\begin{theorem}[Fixed Point]\label{th:existence}
There exists a unique $b^* \in [b_0, f(b_0)]$ such that $b^* = f(b^*)$, and $b^* = b_0 = f(b_0)$ if and only if $\dq = u-l$.
\end{theorem}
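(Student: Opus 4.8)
The plan is to realise the fixed point as a zero of the auxiliary function $g(b) = f(b) - b$ and to use the two preceding lemmas to control $g$ at the endpoints of the interval and to force uniqueness. As a composition of $b \mapsto \Delta C(b)$ with the smooth reciprocal appearing in (\ref{eq:fb}), $f$ is continuous (indeed differentiable, as already presupposed in Lemma~\ref{lm:fprime}) on the region where the denominator $\epsilon - \log \Delta C(b) - \log(1-\delta)$ is positive, so $g$ is continuous there as well. The whole argument then splits into an existence step driven by the intermediate value theorem and a uniqueness step driven by strict monotonicity.

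For existence I would evaluate $g$ at the two endpoints of the candidate interval $[b_0, f(b_0)]$, which is non-empty by Lemma~\ref{lm:posinitpoint}. At the left endpoint that same lemma gives $g(b_0) = f(b_0) - b_0 \ge 0$. The key step is at the right endpoint: feeding the inequality $f(b_0) \ge b_0$ into the non-increasing property from Lemma~\ref{lm:fprime} reverses it, giving $f(f(b_0)) \le f(b_0)$, and hence $g(f(b_0)) = f(f(b_0)) - f(b_0) \le 0$. Thus $g(b_0) \ge 0 \ge g(f(b_0))$, and the intermediate value theorem produces a point $b^* \in [b_0, f(b_0)]$ with $g(b^*) = 0$, i.e. $b^* = f(b^*)$.

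For uniqueness I would note that $g^\prime(b) = f^\prime(b) - 1 \le -1 < 0$ for all $b > 0$, using $f^\prime(b) \le 0$ from Lemma~\ref{lm:fprime}; so $g$ is strictly decreasing and vanishes at most once, making $b^*$ unique. The equivalence is then read off from the "if and only if" clause of Lemma~\ref{lm:posinitpoint}: if $\dq = u-l$ then $f(b_0) = b_0$, the interval collapses to $\{b_0\}$, and $b^* = b_0 = f(b_0)$; conversely $b^* = b_0 = f(b_0)$ forces $f(b_0) = b_0$, which occurs only when $\dq = u-l$.

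I expect the only genuine subtlety to be the monotonicity-reversal step that yields $g(f(b_0)) \le 0$, together with the tacit requirement that $f$ remain well defined and positive throughout $[b_0, f(b_0)]$ (equivalently, that the denominator in (\ref{eq:fb}) stays positive there). Since $\Delta C(b) \ge 1$, the value $b^*$ can only exceed $b_0$, so one should confirm that increasing $b$ past $b_0$ does not drive the denominator to zero on the relevant range; this is guaranteed once $b_0$ itself is admissible, because $f$ is non-increasing and bounded above by $f(b_0)$ on the interval. Everything else is routine.
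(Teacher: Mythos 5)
Your proposal is correct and follows essentially the same route as the paper: both rest on $f(b_0) \ge b_0$ (Lemma~\ref{lm:posinitpoint}), the monotonicity of $f$ (Lemma~\ref{lm:fprime}) to deduce $f(f(b_0)) \le f(b_0)$, and continuity to locate the fixed point in $[b_0, f(b_0)]$, with the same reading of the ``if and only if'' clause. Your auxiliary function $g(b) = f(b) - b$ and the observation $g^\prime(b) \le -1 < 0$ simply make explicit the paper's terser claim that uniqueness follows from the monotonicity of $f$.
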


\begin{proof}
Since $f(b_0) \ge b_0$ (Lemma~\ref{lm:posinitpoint}), $f^\prime \le 0$ (Lemma~\ref{lm:fprime}) and $f(b)$ is continuous on $b \in [b_0, \infty)$ (since it is differentiable), it follows that $f(b)$ has a unique fixed point $b^* \in [b_0, \infty)$, where uniqueness follows from the monotonicity of $f$.

Furthermore, since $f^\prime \le 0$ and $f(b_0) \ge b_0$, it follows that $f(f(b_0)) \le f(b_0)$.
We must therefore have $b^* \in [b_0, f(b_0)]$.
And, since $f(b_0) = b_0$ if and only if $\dq = u-l$ (Lemma~\ref{lm:posinitpoint}), the result follows.
\end{proof}

It follows from Theorem~\ref{th:existence} that the mechanism $W_q$ from Definition~\ref{df:boundedlaplace} satisfies differential privacy for $b = b^*$.
Given that we have a bounded domain in which $b^*$ lies, and since $f$ is continuous, the bisection method is guaranteed to converge to $b^*$ for any given $\epsilon \ge 0$, $0 \le \delta \le 1$, $u > l$ and  $\dq \le u - l$.

\begin{theorem}
Let $b^* \in \R_{> 0}$ such that $b^* = f(b^*)$.
Then, given any $\xi > 0$,
$$b^* + \xi > f(b^* + \xi).$$
\end{theorem}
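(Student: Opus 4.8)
The plan is to derive the claim directly from the monotonicity of $f$ established in Lemma~\ref{lm:fprime}, together with the defining fixed-point property $b^* = f(b^*)$. Since $f$ maps $\R_{>0}$ into itself and $b^*, \xi > 0$, the point $b^* + \xi$ is positive and hence lies in the domain of $f$, so $f(b^* + \xi)$ is well-defined. No new estimates on $C_q$ or $\Delta C(b)$ are needed; everything will follow from the qualitative behaviour of $f$ already proven.

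First I would invoke Lemma~\ref{lm:fprime}, which gives $f^\prime(b) \le 0$ for every $b \ne 0$. A function with non-positive derivative is non-increasing, so on the interval $[b^*, b^* + \xi] \subset \R_{>0}$ we obtain $f(b^* + \xi) \le f(b^*)$. Substituting the fixed-point identity $f(b^*) = b^*$ then yields $f(b^* + \xi) \le b^*$.

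Finally, since $\xi > 0$ we trivially have $b^* < b^* + \xi$, and chaining the two inequalities gives $f(b^* + \xi) \le b^* < b^* + \xi$, which is exactly the desired strict inequality $b^* + \xi > f(b^* + \xi)$. Note that the strictness of the conclusion comes entirely from $\xi > 0$ and not from any strict monotonicity of $f$, so weak monotonicity (as supplied by Lemma~\ref{lm:fprime}) suffices.

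There is essentially no analytic obstacle here; the only point requiring a moment's care is the degenerate case $\dq = u - l$, in which Lemma~\ref{lm:fprime} gives $f^\prime \equiv 0$ and hence $f$ constant. Even then $f(b^* + \xi) = f(b^*) = b^* < b^* + \xi$, so the strict inequality still holds and the argument goes through uniformly. Conceptually, the result records that every $b > b^*$ strictly over-satisfies the implicit constraint $b \ge f(b)$ underlying Theorem~\ref{th:mainres}, confirming that the admissible set of shape parameters is exactly $[b^*, \infty)$ and that $b^*$ is its minimal element.
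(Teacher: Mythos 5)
Your proof is correct and follows essentially the same route as the paper: monotonicity of $f$ from Lemma~\ref{lm:fprime} combined with the fixed-point identity $f(b^*)=b^*$, with the strictness of the final inequality supplied by $\xi>0$. You are in fact slightly more careful than the paper, whose proof asserts $f^\prime(b)<0$ strictly (which Lemma~\ref{lm:fprime} does not give when $\dq=u-l$); your observation that weak monotonicity suffices, together with the explicit treatment of the degenerate case, closes that small gap.
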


\begin{proof}
By Theorem~\ref{th:existence}, such a fixed point $b^*$ exists.
Furthermore, from Lemma~\ref{lm:fprime} we have $f^\prime(b) < 0$, hence
\begin{align*}
f(b^* + \xi) < f(b^*) &= b^* < b^* + \xi.\qedhere
\end{align*}
\end{proof}

Consequently by Theorem~\ref{th:mainres}, any fixed point $b^*$ is a lower bound on all values $b$ that satisfy {\ed}-differential privacy.

\section{Related Work}\label{sc:related}

In~\cite{Liu17}, the statistical properties of bounding and truncating the Laplace mechanism were explored, without examining the differential privacy properties of the bounded Laplace mechanism.
The same author followed with a study on generalised Gaussian mechanisms for differential privacy~\cite{Liu18}.
The results applied to the bounded Laplace mechanism showed a doubling of the noise variance ($\epsilon = 2 \epsilon^\prime$) is required, an increase we now know to be excessive.

In~\cite{ZZX12}, regression analysis under differential privacy was studied.
The authors looked to add noise (using the Laplace mechanism) to the coefficients of an objective function to achieve differential privacy, but this can result in an unbounded objective function.
Their first approach at solving this was to re-run the differential privacy mechanism until the result gives a solution to the optimisation problem.
This approach has the effect of doubling the noise variance (since $\epsilon = 2 \epsilon^\prime$), which our work has shown may be excessive.
The authors also proposed an alternative approach to maintain the privacy budget at $\epsilon$.

A na\"ive Bayes machine learning classifier was described in~\cite{VSB13}, which achieves differential privacy by adding Laplace noise to the model parameters.
For numerical data, na\"ive Bayes calculates the mean and standard deviation of the feature in order to classify unseen data.
The authors propose re-sampling from the Laplace distribution to ensure the differentially private standard deviations are positive, without modifying the variance.
From what we now know, this approach does not satisfy differential privacy.

Consistency in differential privacy has also been studied previously.
Examples include achieving consistent releases of margin\-als~\cite{BCD07} and histograms~\cite{HRM10}.
In~\cite{BCD07} the authors sought to release marginals consisting of non-negative integers, with consistent sums across marginals.
This was achieved using Fourier transformations and linear programming.
In~\cite{HRM10}, the authors used \emph{constrained inference} to ensure consistency in histogram counts through post-processing.

\section{Conclusion}

In this paper, we have shown that the bounded Laplace mechanism does not typically satisfy differential privacy when inheriting parameters from the Laplace mechanism, except in the case when $\dq = u-l$.
We have also presented details of calculating the required parameters for the bounded Laplace mechanism to satisfy differential privacy.
It was shown that the noise added to achieve differential privacy must be of greater variance than that of the pure Laplace mechanism.

The results of this paper highlight the dangers of re-sampling from the Laplace mechanism in applications of differential privacy to achieve valid\slash plausible outputs.
Researchers may be inadvertently violating differential privacy in doing so, or overcompensating by increasing the privacy budget excessively.
Our robust method of calculating the optimal noise variance will allow privacy researchers and practitioners to deploy the bounded Laplace mechanism with confidence and certainty.

\begin{acks}
The authors would like to thank the anonymous reviewers for their time in reading the paper and the helpful comments they provided.
\end{acks}

\pagebreak

\appendix

\section*{Appendix}
\setcounter{section}{1}

\subsection{Proof of Lemma~\ref{lm:min}}\label{sc:a:lm:min}

In order to prove Lemma~\ref{lm:min}, we must first consider the following lemmas concerning $C_q$.
	
\begin{lemma}\label{lm:dcdz}
Let $q \in D$ and $b>0$, and let $C_q$ be given by Definition~\ref{df:boundedlaplace}. Then
$\frac{\partial}{\partial z} \left( \frac{C_{q+z}}{C_q} e^{\frac{z}{b}}\right) \ge 0$,
whenever $q+z \le u$.
\end{lemma}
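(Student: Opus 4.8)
The plan is to compute the derivative explicitly using the closed form for $C_q$ established in Lemma~\ref{lm:cq}, namely $C_q = 1 - \frac{1}{2}(e^{-\frac{q-l}{b}} + e^{-\frac{u-q}{b}})$. Since $C_q$ does not depend on $z$ and is strictly positive for $q \in D$, I would first factor it out: the sign of $\frac{\partial}{\partial z}(\frac{C_{q+z}}{C_q} e^{\frac{z}{b}})$ agrees with the sign of $\frac{\partial}{\partial z}(C_{q+z} e^{\frac{z}{b}})$, so it suffices to analyse the latter.

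Next I would substitute $q \mapsto q+z$ into the formula for $C_q$ and multiply through by $e^{\frac{z}{b}}$. The key observation driving the whole argument is that upon distributing $e^{\frac{z}{b}}$, the term arising from $e^{-\frac{q+z-l}{b}}$ collapses to $e^{-\frac{q-l}{b}}$, which is independent of $z$, while the term from $e^{-\frac{u-q-z}{b}}$ becomes $e^{\frac{2z+q-u}{b}}$. Thus $C_{q+z} e^{\frac{z}{b}} = e^{\frac{z}{b}} - \frac{1}{2}(e^{-\frac{q-l}{b}} + e^{\frac{2z+q-u}{b}})$, and only two of the three summands depend on $z$.

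Differentiating term by term then yields $\frac{\partial}{\partial z}(C_{q+z} e^{\frac{z}{b}}) = \frac{1}{b}(e^{\frac{z}{b}} - e^{\frac{2z+q-u}{b}})$. Because $x \mapsto e^x$ is increasing, this expression is non-negative precisely when $\frac{z}{b} \ge \frac{2z+q-u}{b}$, which rearranges to $q + z \le u$. This is exactly the hypothesis of the lemma, so the claimed inequality follows immediately.

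I do not anticipate a genuine obstacle here, as the statement reduces to a one-line exponential comparison; the only point requiring care is the bookkeeping in the exponents during the simplification of $C_{q+z} e^{\frac{z}{b}}$, where it is easy to drop a sign or conflate the two exponential contributions. The elegance---and the reason the hypothesis $q+z \le u$ turns out to be exactly the right one---comes from the cancellation that leaves the $e^{-\frac{q-l}{b}}$ term constant in $z$, so that the derivative depends on only a single varying exponential.
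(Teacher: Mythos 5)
Your proposal is correct and follows essentially the same route as the paper: both use the closed form of $C_q$ from Lemma~\ref{lm:cq}, factor out the positive constant $C_q$, and reduce the claim to the sign of $\frac{1}{b}\left(e^{\frac{z}{b}} - e^{\frac{2z+q-u}{b}}\right)$, which is non-negative exactly when $q+z \le u$. The only difference is organisational — you distribute $e^{\frac{z}{b}}$ before differentiating, whereas the paper applies the product rule and leaves the answer in the factored form $\frac{1}{C_q}\frac{1}{b}\left(1-e^{-\frac{u-q-z}{b}}\right)e^{\frac{z}{b}}$ — and the two expressions are identical.
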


\begin{proof}
We first note that
$$\frac{\partial}{\partial z} C_{q+z} = \frac{1}{2b}\left(e^{-\frac{q+z-l}{b}} - e^{-\frac{u-q-z}{b}}\right).$$
We then see that
$$\frac{\partial}{\partial z} \left( \frac{C_{q+z}}{C_q} e^{\frac{z}{b}}\right) = \frac{1}{C_q} \frac{1}{b} \left( 1-e^{-\frac{u-q-z}{b}}\right) e^{\frac{z}{b}}.$$

Since $b>0$ by assumption, it follows that $\frac{\partial}{\partial z} \left( \frac{C_{q+z}}{C_q} e^{\frac{z}{b}}\right) \ge 0$ if and only if
$q+z \le u$.
\end{proof}
	
\begin{lemma}\label{lm:dcdq}
Let $q \in D$ and $z \ge 0$, and let $C_q$ be given by Definition~\ref{df:boundedlaplace}. Then
$\frac{\partial}{\partial q} \left( \frac{C_{q+z}}{C_q} e^{\frac{z}{b}}\right) \le 0$.
\end{lemma}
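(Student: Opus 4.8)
The plan is to reduce the claim to the \emph{log-concavity} of the map $q \mapsto C_q$ on $D$. Since $e^{\frac{z}{b}}$ does not depend on $q$, and since $C_q > 0$ for every $q \in D$ (the integrand in Definition~\ref{df:boundedlaplace} is strictly positive on the nondegenerate interval $D$), I would first write
$$\frac{\partial}{\partial q}\left(\frac{C_{q+z}}{C_q} e^{\frac{z}{b}}\right) = \frac{C_{q+z}}{C_q} e^{\frac{z}{b}} \left( \frac{\partial}{\partial q}\log C_{q+z} - \frac{\partial}{\partial q}\log C_q\right) = \frac{C_{q+z}}{C_q} e^{\frac{z}{b}}\bigl(h(q+z) - h(q)\bigr),$$
where $h(q) := \frac{\partial}{\partial q}\log C_q$. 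The prefactor is strictly positive, so it suffices to show $h(q+z) \le h(q)$ for $z \ge 0$, that is, that $h$ is non-increasing, equivalently that $\log C_q$ is concave.

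Next I would verify this concavity by a direct second-derivative computation. Using the closed form $C_q = 1 - \frac{1}{2}\bigl(e^{-(q-l)/b} + e^{-(u-q)/b}\bigr)$ from Lemma~\ref{lm:cq}, set $a = e^{-(q-l)/b}$ and $c = e^{-(u-q)/b}$, so that $C_q = 1 - \frac{1}{2}(a+c)$, $C_q' = \frac{1}{2b}(a-c)$ (consistent with the derivative computed in Lemma~\ref{lm:dcdz}), and $C_q'' = -\frac{1}{2b^2}(a+c)$, where primes denote differentiation in $q$. Since $\log C_q$ is concave exactly when $C_q'' C_q - (C_q')^2 \le 0$, I would substitute and simplify; the cross terms collapse via the identity $(a+c)^2 - (a-c)^2 = 4ac$ to give
$$C_q'' C_q - (C_q')^2 = \frac{1}{b^2}\left(ac - \frac{a+c}{2}\right).$$

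Finally, because $q \in [l,u]$ forces $q - l \ge 0$ and $u - q \ge 0$, we have $a, c \in (0,1]$, whence $ac \le a$ and $ac \le c$, so $2ac \le a + c$ and the bracket is $\le 0$. This establishes the concavity of $\log C_q$, hence $h(q+z) \le h(q)$ for $z \ge 0$, and the argument closes. I expect the only real obstacle to be the bookkeeping in the second-derivative simplification; once the expression is reduced to $ac - \frac{1}{2}(a+c)$, the sign is immediate from $a, c \le 1$. As in Lemma~\ref{lm:dcdz}, this implicitly uses $q + z \in D$ (i.e.\ $q + z \le u$), so that $C_{q+z}$ is given by the same closed form.
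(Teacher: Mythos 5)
Your proof is correct, but it takes a genuinely different route from the paper's. The paper applies the quotient rule directly to $\frac{C_{q+z}}{C_q}e^{z/b}$, substitutes the closed forms of $\frac{\partial}{\partial q}C_{q+z}$ and $\frac{\partial}{\partial q}C_q$, and factors the resulting numerator as $\left(e^{-z/b}-1\right)\left(e^{(u-q)/b}-1\right)+\left(1-e^{z/b}\right)\left(e^{(q-l)/b}-1\right)$, each summand of which is non-positive because $q\in D$ and $z\ge 0$. You instead isolate the structural reason: writing the logarithmic derivative of the ratio as $h(q+z)-h(q)$ with $h=(\log C_q)'$, the claim reduces to $h$ being non-increasing, i.e.\ to log-concavity of $q\mapsto C_q$ on $D$, which you verify via the second-derivative criterion $C_q''C_q-(C_q')^2=\frac{1}{b^2}\left(ac-\frac{a+c}{2}\right)\le 0$ with $a,c\in(0,1]$. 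I checked the algebra ($C_q'=\frac{1}{2b}(a-c)$, $C_q''=-\frac{1}{2b^2}(a+c)$, and the collapse via $(a+c)^2-(a-c)^2=4ac$); it is right, and the final sign follows from $2ac\le a+c$. Your version buys a cleaner and reusable statement — log-concavity of the normalisation constant immediately gives monotonicity of $C_{q+z}/C_q$ in $q$ for every $z\ge 0$ — at the price of a second derivative; the paper stays with first derivatives but needs a less transparent factorisation. Both arguments (and indeed the lemma as used in Lemma~\ref{lm:min}) implicitly require $q+z\le u$ so that the closed form of Lemma~\ref{lm:cq} applies to $C_{q+z}$ and concavity holds on all of $[q,q+z]$; you flag this explicitly, which the paper does not.
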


\begin{proof}
We first note that
$$\frac{\partial}{\partial q} C_{q+z} = \frac{1}{2b}\left(e^{-\frac{q+z-l}{b}} - e^{-\frac{u-q-z}{b}}\right).$$

We then find
\begin{align*}
\frac{\partial}{\partial q} \left(\frac{C_{q+z}}{C_q} e^{\frac{z}{b}} \right) &= \frac{e^{\frac{z}{b}}}{{C_q}^2} \left(C_q \frac{\partial}{\partial q} C_{q+z} - C_{q+z} \frac{\partial}{\partial q} C_q\right)\\
&= \frac{e^{\frac{z}{b}}}{2b\,{C_q}^2} \left( e^{-\frac{q-l}{b}} \left(e^{-\frac{z}{b}} - 1\right) + e^{-\frac{u-q}{b}} \left( 1-e^{\frac{z}{b}}\right)\right.\\
&\qquad\qquad \left. +\: e^{-\frac{u-l-z}{b}} - e^{-\frac{u-l+z}{b}}\right)\\
&= \frac{e^{\frac{z}{b}} \left( \left(e^{-\frac{z}{b}} - 1\right) \left(e^{\frac{u-q}{b}} - 1\right) + \left(1-e^{\frac{z}{b}}\right) \left( e^{\frac{q-l}{b}} - 1\right)\right)}{2b\, e^{\frac{u-l}{b}} {C_q}^2}.
\end{align*}

Since $b>0$, it's clear that the denominator is positive.
Furthermore, since $q \in D$, it follows that $e^{\frac{u-q}{b}}, e^{\frac{q-l}{b}} > 1$.
Also, since $z \ge 0$ by assumption, we have $e^{-\frac{z}{b}} < 1$ and $e^{\frac{z}{b}} > 1$.
Hence, $\frac{\partial}{\partial q} \left(\frac{C_{q+z}}{C_q} e^{\frac{z}{b}} \right) \le 0$, as required.
\end{proof}

Using Lemmas~\ref{lm:dcdz} and~\ref{lm:dcdq}, we can now prove Lemma~\ref{lm:min}.

\begin{proof}[Proof (Lemma~\ref{lm:min})]
Since $C_q$ is symmetric about $\frac{u+l}{2}$, we have $C_q = C_{u+l-q}$.
By letting $q_0 = u+l-q$ and $q_0^\prime = u+l-q^\prime$, then,
$\frac{C_{q^\prime}}{C_q} e^{\frac{|q^\prime-q|}{b}} = \frac{C_{q_0^\prime}}{C_{q_0}} e^{\frac{|q_0^\prime-q_0|}{b}}$, and $q^\prime > q$ if $q^\prime_0 < q_0$.
Hence, without loss of generality we can assume that $q^\prime \ge q$, so we are examining 
$$\max_{\substack{q, q^\prime \in D \\ 0 \le q^\prime-q \le \dq}} \frac{C_{q^\prime}}{C_q} e^{\frac{q^\prime-q}{b}}.$$

Equivalently, since $q^\prime \ge q$, we can consider $\max_{\substack{q \in D \\ 0 \le z \le \dq}} \frac{C_{q+z}}{C_q} e^{\frac{z}{b}}$.

By Lemma~\ref{lm:dcdq}, $\frac{\partial}{\partial q} \left( \frac{C_{q+z}}{C_q} e^{\frac{z}{b}}\right) \le 0$, hence the maximum is attained at the smallest possible $q$, \ie\ $$\max_{\substack{q \in D \\ 0 \le z \le \dq}} \frac{C_{q+z}}{C_q} e^{\frac{z}{b}} = \max_{0 \le z \le \dq} \frac{C_{l+z}}{C_l} e^{\frac{z}{b}}.$$

Similarly, by Lemma~\ref{lm:dcdz}, $\frac{\partial}{\partial z} \left( \frac{C_{q+z}}{C_q} e^{\frac{z}{b}}\right) \ge 0$, hence the maximum is attained at the largest possible $z$, giving $\max_{\substack{q \in D \\ 0 \le z \le \dq}} \frac{C_{q+z}}{C_q} e^{\frac{z}{b}} = \frac{C_{l+\dq}}{C_l} e^{\frac{\dq}{b}}$, as required.
\end{proof}

\subsection{Proof of Theorem~\ref{th:mainres}}\label{sc:a:th:mainres}

\begin{proof}[Proof (Theorem~\ref{th:mainres})]
We follow a similar method of proof as used in Example~5 of~\cite{HLM15}.

Given $A \subseteq D$, and noting that $\Prob(W_q \in A) = \frac{1}{C_q} \Prob(Y_q \in A)$, where $Y_q$ is given by (\ref{eq:laplacemech}), we are seeking to show that
$$\frac{1}{C_q} \Prob(Y_q \in A) \le e^\epsilon \frac{1}{C_{q^\prime}} \Prob(Y_{q^\prime} \in A) + \delta,$$
for any measurable $A \subseteq D$ and where $q, q^\prime \in D$ and $|q-q^\prime| \le \dq$.
Given that $\Prob(Y_q \in A) = \int_A \frac{e^{-\frac{|x-q|}{b}}}{2b} dx$, we have,
$$\frac{1}{C_q} \int_A \frac{e^{-\frac{|x-q|}{b}}}{2b} dx \le e^\epsilon \frac{1}{C_{q^\prime}} \int_A \frac{e^{-\frac{|x-q^\prime|}{b}}}{2b} dx + \delta.$$

Using the triangle inequality, we see that $|x-q^\prime| \le |x-q| + |q^\prime-q|$, so it is sufficient to show that \linebreak $\frac{1}{C_q} \int_A \frac{e^{-\frac{|x-q|}{b}}}{2b} dx \le e^{\epsilon-\frac{|q^\prime-q|}{b}} \frac{1}{C_{q^\prime}} \int_A \frac{e^{-\frac{|x-q|}{b}}}{2b} dx + \delta$,
or equivalently,
$$1 \le e^{\epsilon-\frac{|q-q^\prime|}{b}} \frac{C_q}{C_{q^\prime}} + \frac{C_q}{\int_A \frac{e^{-\frac{|x-q|}{b}}}{2b} dx} \delta.$$

Since $A \subseteq D$ and given the definition of $C_q$ in Definition~\ref{df:boundedlaplace}, it follows that $C_q \ge \int_A \frac{e^{-\frac{|x-q|}{b}}}{2b} dx$, hence it is sufficient to show that $1 \le e^{\epsilon - \frac{|q^\prime-q|}{b}} \frac{C_d}{C_{q^\prime}} + \delta$.

By Lemma~\ref{lm:min}, $\Delta C(b) \, e^{\frac{\dq}{b}} \ge \frac{C_{q^\prime}}{C_q} e^{\frac{|q^\prime-q|}{b}}$ when $|q^\prime-q| \le \dq$, or equivalently $\frac{1}{\Delta C(b)} e^{-\frac{\dq}{b}} \le \frac{C_q}{C_{q^\prime}} e^{-\frac{|q^\prime-q|}{b}}$, so it is sufficient to show that
\begin{equation}\label{eq:bounded:solveforb}
1 \le \frac{1}{\Delta C(b)} e^{\epsilon-\frac{\dq}{b}} + \delta.
\end{equation}
Solving (\ref{eq:bounded:solveforb}) implicitly for $b$ completes the proof.
\end{proof}

\subsection{Proof of Lemma~\ref{lm:posinitpoint}}\label{sc:a:lm:posinitpoint}

\begin{proof}[Proof (Lemma~\ref{lm:posinitpoint})]
We first note that $f(b) > 0$ if and only if $\epsilon - \log \Delta C(b) - \log(1-\delta) > 0$, or equivalently, if $\Delta C(b) < \frac{e^\epsilon}{1-\delta}$.
We assume that $\frac{e^\epsilon}{1-\delta} > 1$ (\ie\ that $\epsilon$ and $\delta$ are not simultaneously zero).

Given $b_0=\frac{\dq}{\epsilon - \log(1-\delta)}$, we see that
\begin{align}
\Delta C(b_0) &= \frac{2 - e^{-\epsilon + \log(1-\delta)} - e^{-\left(\frac{u-l}{\dq} - 1\right) \left( \epsilon - \log(1-\delta)\right)}}{1 - e^{-\frac{u-l}{\dq}\left(\epsilon - \log(1-\delta)\right)}}\nonumber\\
&= \frac{2 - \frac{1-\delta}{e^\epsilon} -\left(\frac{e^\epsilon}{1-\delta}\right)^{1-\frac{u-l}{\dq}}}{1 - \left(\frac{e^\epsilon}{1-\delta}\right)^{-\frac{u-l}{\dq}}}\nonumber\\
&= \frac{2 \left(\frac{e^\epsilon}{1-\delta}\right) - 1 -\left(\frac{e^\epsilon}{1-\delta}\right)^{2-\frac{u-l}{\dq}}}{\frac{e^\epsilon}{1-\delta} - \left(\frac{e^\epsilon}{1-\delta}\right)^{1-\frac{u-l}{\dq}}}.\label{eq:longdcq}
\end{align}

For simplicity, we relabel (\ref{eq:longdcq}) by setting $\alpha = \frac{e^\epsilon}{1-\delta}$ and $\beta = \frac{u-l}{\dq}$, giving
$$\Delta C(b_0) = \frac{2\alpha - 1 - \alpha^{2-\beta}}{\alpha - \alpha^{1-\beta}}.$$
We note that $\alpha > 1$ and $\beta \ge 1$.

Since $\max \left(2\alpha-\alpha^2\right) = 1$ and the maximum occurs at $\alpha=1$, it follows that $2\alpha-\alpha^2 < 1$ when $\alpha > 1$.
We can then make the following series of deductions:
\begin{align*}
2\alpha-\alpha^2 &< 1,\\
2\alpha - 1 &< \alpha^2,\\
2\alpha -  1 - \alpha^{2-\beta} &< \alpha^2  - \alpha^{2-\beta},\\
\frac{2\alpha - 1 - \alpha^{2-\beta}}{\alpha-\alpha^{1-\beta}} &< \alpha.
\end{align*}
Hence,
$$\Delta C(b_0) < \alpha = \frac{e^\epsilon}{1-\delta},$$
and it follows that $f(b_0) > 0$.

We can also show that $\Delta C(b_0) \ge 1$ through the following series of deductions:
\begin{align}
\alpha^{1-\beta} &\le 1,\label{eq:lm:posinitpoint}\\
\alpha^{1-\beta}(\alpha-1) & \le \alpha-1,\nonumber\\
0 \le \alpha - \alpha^{1-\beta} &\le 2 \alpha - 1 - \alpha^{2-\beta},\nonumber\\
\frac{2\alpha - 1 - \alpha^{2-\beta}}{\alpha - \alpha^{1-\beta}} &\ge 1.\nonumber
\end{align}
Hence, $\log \Delta C(b_0) \ge 0$.
It then follows that
$$\frac{\dq}{\epsilon - \log \Delta C(b_0) - \log(1-\delta)} \ge \frac{\dq}{\epsilon - \log(1-\delta)},$$
and that $f(b_0) \ge b_0$.
Furthermore, from (\ref{eq:lm:posinitpoint}), $f(b_0)=b_0$ if and only if $\dq = u-l$.
\end{proof}

\subsection{Proof of Lemma~\ref{lm:fprime}}\label{sc:a:lm:fprime}

\begin{proof}[Proof (Lemma~\ref{lm:fprime})]
From (\ref{eq:fb}), we have
$$f^\prime(b) = \frac{f(b)^2}{\dq \Delta C(b)} \frac{\partial \Delta C(b)}{\partial b},$$
hence $f^\prime \le 0$ if and only if $\frac{\partial \Delta C(b)}{\partial b} \le 0$.
From the definition of $\Delta C(b)$, after some simplification we have
\begin{align}
\frac{\partial \Delta C(b)}{\partial b} &=-\left(\frac{1}{2b\,C_l(b)}\right)^2 \bigg(\dq \left(e^{-\frac{\dq}{b}} + e^{-\frac{2(u-l)-\dq}{b}}\right)\nonumber\\
&\qquad\qquad + e^{-\frac{u-l}{b}} (u-l-\dq) \left(e^{\frac{\dq}{b}} + e^{-\frac{\dq}{b}}\right)\nonumber\\
&\qquad\qquad - 2(u-l)e^{-\frac{u-l}{b}}\bigg)\nonumber\\
&\le - \left(\frac{1}{2b\,C_l(b)}\right)^2 \bigg(\dq \left(e^{-\frac{\dq}{b}} + e^{-\frac{2(u-l)-\dq}{b}}\right) - 2 \dq e^{-\frac{u-l}{b}}\bigg)\label{eq:lm7prf}\\
&= - \left(\frac{1}{2b\,C_l(b)}\right)^2 \dq e^{-\frac{\dq}{b}} \left(1-e^{-\frac{u-l-\dq}{b}}\right)^2\nonumber\\
& \le 0,\nonumber
\end{align}
where (\ref{eq:lm7prf}) follows since $e^a + e^{-a} \ge 2$ for all $a \in \R$.
Note that we have $\frac{\partial \Delta C(b)}{\partial b}=0$ if and only if $\dq = u-l$.
Also note that this result holds for all $b \neq 0$, and therefore for all $b \ge b_0$.

We therefore conclude that $f^\prime(b) \le 0$ for all $b \ge b_0$, and furthermore that $f^\prime(b) = 0$ if and only if $\dq = u-l$.
\end{proof}

\end{document}